\theoremstyle{plain}
\newtheorem{lemma}{Lemma}
\newtheorem*{remark}{Remark}
\begin{document}
%\pagenumbering{gobble}% Remove page numbers (and reset to 1)
\clearpage
%\maketitle
%Title.
% ------
\title{\huge Bandwidth and {P}ower {A}llocation for {T}ask-{O}riented {S}emantic {C}ommunication}
%
% Single address.
% ---------------
\author{\IEEEauthorblockN{Chuanhong Liu\IEEEauthorrefmark{1}, Caili Guo\IEEEauthorrefmark{1}, Yang Yang\IEEEauthorrefmark{2}, and Jiujiu Chen\IEEEauthorrefmark{1}
\\\IEEEauthorrefmark{1}Beijing Key Laboratory of Network System Architecture and Convergence, School of
Information and Communication Engineering, Beijing University of Posts and Telecommunications, Beijing 100876, China
\\\IEEEauthorrefmark{2}Beijing Laboratory of Advanced Information Networks, School of
Information and Communication Engineering, Beijing University of Posts and Telecommunications, Beijing 100876, China
\\Email:\{2016\_liuchuanhong,guocaili,yangyang01,chenjiujiu\}@bupt.edu.cn}
\thanks{ This work was supported by National Natural Science Foundation of China (61871047), National Natural Science Foundation of China (61901047), Beijing Natural Science Foundation (4204106) and China Postdoctoral Science Foundation (2018M641278).
}
}
% For example:
% ------------
%\address{School\\
%   Department\\
%   Address}
%
% Two addresses (uncomment and modify for two-address case).
% ----------------------------------------------------------
%\twoauthors
%  {A. Author-one, B. Author-two\sthanks{Thanks to XYZ agency for funding.}}
%   {School A-B\\
%   Department A-B\\
%   Address A-B}
%  {C. Author-three, D. Author-four\sthanks{The fourth author performed the work
%   while at ...}}
%   {School C-D\\
%   Department C-D\\
%   Address C-D}
%
%\tableofcontents
%\pdfbookmarks
%\ninept
%
\maketitle
\pagestyle{empty}
\vspace{0cm}
\begin{abstract}
Deep learning enabled semantic communication has been studied to improve  communication efficiency while guaranteeing intelligent task performance. 
Different from conventional communications systems, the resource allocation in semantic communications no longer just pursues the bit transmission rate, but focuses on how to better compress and transmit semantic to complete subsequent intelligent tasks. This paper aims to appropriately allocate the bandwidth and power for artificial intelligence (AI) task-oriented semantic communication and proposes a joint compressiom ratio and resource allocation (CRRA) algorithm. We first analyze the relationship between the AI task's performance and the semantic information. Then, to optimize the AI task's perfomance under resource constraints, a bandwidth and power allocation problem is formulated. The problem is first separated into two subproblems due to the non-convexity. The first subproblem is a compression ratio optimization problem with a given resource allocation scheme, which is solved by a enumeration algorithm. The second subproblem is to find the optimal resource allocation scheme, which is transformed into a convex problem by successive convex approximation method, and solved by a convex optimization method. The optimal semantic compression ratio and resource allocation scheme are obtained by iteratively solving these two subproblems. Simulation results show that the proposed algorithm can efficiently improve the AI task's performance by up to 30\% comprared with baselines.
%Limited radio frequency (RF) resources restrict the number of users that can participate in federated learning (FL) thus affecting FL convergence speed and performance. In this paper, we first introduce visible light communication (VLC) as a supplement to RF in FL and build a hybrid VLC/RF communication system, in which each indoor user can use both VLC and RF to transmit its FL model parameters. Then, the problem of user selection and bandwidth allocation is studied for an FL implemented over a hybrid VLC/RF system aiming to optimize the FL performance. The problem is first separated into two subproblems. The first subproblem is a user selection problem with a given bandwidth allocation, which is solved by a traversal algorithm. The second subproblem is a bandwidth allocation problem with a given user selection, which is solved by a numerical method. The final user selection and bandwidth allocation are obtained by iteratively solving these two subproblems. Simulation results show that the proposed FL algorithm that efficiently uses VLC and RF for FL model transmission can improve the prediction accuracy by up to 10\% compared with a conventional FL system using only RF.
\end{abstract}

\vspace{-0.3cm}
\section{Introduction}
\label{sec:intro}

\IEEEPARstart{R}ecently, semantic communication has attracted extensive attention from industrial and academia\cite{Nine}, which have been identified as a core challenge for the sixth generation (6G) of wireless networks. Semantic communications only transmit necessary information relevant to the specific task at the receiver\cite{info}, which leads to a truly intelligent system with significant reduction in data traffic\cite{Qin_survey}.

There are several priori studies on semantic communications based on deep learning. For text data, the authors in \cite{Xie_Deep} proposed a semantic communication system based on Transformer, in which the concept of semantic information was clarified at the sentence level. Based on \cite{Xie_Deep}, the authors in \cite{Xie_lite} further proposed a lite disributed semantic communication system, making the model easier to deploy at the Internet of things (IoT) devices. For image data, the authors in \cite{Gunduz_JSCC} propoesd a joint source-channel coding scheme based on convolutional neural network (CNN) to transmit image data in wireless channel, which can jointly optimize various modules of communication system. 
%Even though the existing work on semantic communication are novel, the interpretability of neural network is neglected when performing semantic coding, which sometimes results in too much data to be transmitted. Therefore, considering that communication resources are limited, how to use semantic relationships to further reduce the amount of data actually transmitted and the complexity of the system to alleviate communication pressure while ensuring the performance of intelligent tasks is a key issue. 
In our previous work \cite{Liu_AIoT}, we have proposed an intelligent task-oriented semantic communication method in Artificial Intelligence \& Internet of Things (AIoT), in which semantic compression is proposed based on semantic relationship between concepts and feature maps. The expriment results show that the redundancy can be further removed and the task perfomance can be improved, especially in a resource-constrained environment. Above all, the existing works on semantic communication are focused on the implementation of semantic compression and semantic communication systems. However, appropriate resource allocation is also significant to semantic communications. In particular, semantic encoding is essentially a compression task, which is closely related to the allocated bandwidth resources. In addition, power is a key resource in AIoT scenarios, which is closely related to the life of the IoT devices.
%In particular, in AIoT scenarios, bandwidth resources are relatively tight and power resources are particularly important for AIoT devices, thus a appropriate resource allocation scheme can greatly improve the performance of semantic communication systems. In addition, how to determine the optimal semantic compression ratio is also worth studying. 
To appropriately optimize bandwidth, power allocation and semantic compression ratio in semantic communications, following two key issues remain to be solved:
%However, there are still two problems to be solved: 

\emph{Question 1: how to determine the compression ratio to obtain the best trade-off between task performance and data transmission?}

\emph{Question 2: how to appropriately allocate communication resources (including bandwidth and transmit power) for multiple devices to obtain global optimal performance?}

In this paper, we investigate the optimal semantic compression ratio and the optimal resource allocation method in the task-oriented semantic communication system. \textit{To our best knowledge, this is the first work that study resource allocation in semantic communications}. The main contributions of this paper are summarized as follows:

\begin{itemize}
	\item[$\bullet$] We first study the relationship between AI task performance and semantic compression ratio. For image classification tasks, we use the curve fitting method to obtain the specific relationship between the performance and the semantic compression ratio. Based on the relationship, a joint semantic compression ratio and resource allocation problem is formulated, whose goal is to maximize the task's performance.
	\item[$\bullet$] To solve the problem, we first separate it into two sub-problems. The first subproblem is a compression ratio optimization problem with a given resource allocation scheme, which is solved by a enumeration algorithm. This addresses the aforementioned \emph{Question 1}.
	\item[$\bullet$] Based on the obtained compression ratio, the second subproblem is to find the optimal resource allocation, which is turned into a convex problem and solved by a convex optimization method. The two subproblems are then updated iteratively until a convergent solution is obtained. This addresses the aforementioned \emph{Question 2}.
\end{itemize}
	
%Based on \cite{}, considering in an end to end communication system, a compression ratio optimization problem is formulated, whose goal is to maximize the effective classification accuracy. To solve shis problem, we first perform a series of mathematical derivations for the optimization problem. However, since the optimization goal is non-convex, we then use the enumeration method to find the optimal compression ratio.
%
%Then, we further considered the scenario where multiple devices perform semantic communication at the same time. In this case, in addition to the compression ratio of the device, it is also necessary to further consider the optimal allocation of communication resources. Therefore, we have constructed a joint optimization problem of compression ratio and resource allocation. To solve this probelm, we first separate it into two sub-problems. The first sub-problem is compression optimization problem with given bandwidth and transmit power, which is equivalent to the end-to-end situation. The second sub-problem is resource allocation problem with given compression ratio, which is solved by interior point algorithm. The two subproblems are then updated iteratively until a convergent solution is obtained.

Simulation results show that the proposed algorithm can obtain 30\% gains in terms of the AI task's performance, especially in low resources regime, compared to the algorithm without considering dynamic optimization of compression ratio and resource allocation.

The remainder of this paper is organized as follows. System model and problem formulation are described in Section II. Section III presents the proposed joint semantic compression ratio optimization and resource allocation algorithm. Simulation and numerical results are presented and discussed in Section IV. Finally, Section V draws some important conclusions.

%\emph{Notations:} ${\rm{E}}(\cdot)$, $\left\lfloor  \cdot  \right\rfloor$ and $\left\lceil  \cdot  \right\rceil$ denote the expectation operator, the floor operator and the ceiling operator, respectively. ${\left(  \cdot  \right)^{\rm{T}}}$ is the transpose of a matrix/vector; ${\rm{C}}(\cdot ,\cdot )$ is the combination function; ${\rm{P}}(\cdot)$ is the probability function; $\left\|  \cdot  \right\|$ represents the Frobenious norm of a matrix; and ${\rm{Card}}( \cdot )$ represents the cardinality of a given set.

\vspace{-0.cm}
\section{System Model and Problem Formulation}
\label{sec:system}
In this section, we first introduce the task-oriented semantic communication model. Then, the channel model is given. Next, the intelligent task performance model based on semantic compression is derived. Finally, based on the established models, we formulate a joint semantic compression ratio and resource allocation optimization problem to maximize the task performance.

\vspace{-0.3cm}
\subsection{Task-Oriented Semantic Communication Model}
Fig. 1 illustrates the semantic communication model, which consists of a sender, a physical channel, and a receiver. The sender mainly executes semantic coding, channel coding and modulation. The semantic coding consists of the semantic extraction, the semantic relation extraction and the semantic compression. The receiver mainly performs demodulation, channel decoding and intelligent task calculation (e.g., a classifier in image classification task, a detector in target detection task, etc.). 

\begin{figure}[t]
	\begin{center}
		\includegraphics[width=1\linewidth]{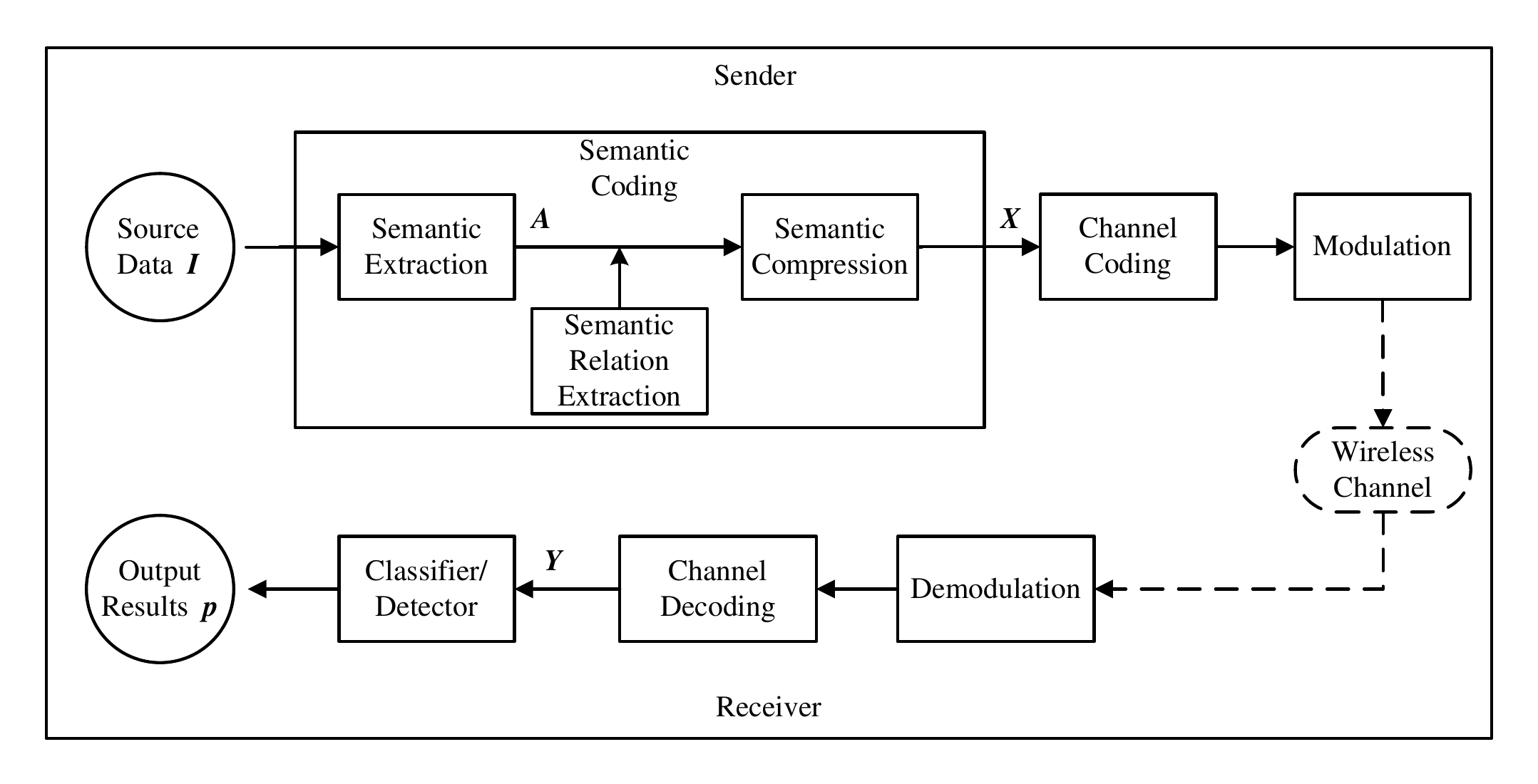}
	\end{center}
	\caption{The system model of semantic communication.}
	\label{fig:system}
\end{figure}

For source data (e.g., images) ${\boldsymbol I}$, the sender first uses neural networks (e.g., CNN) to extract the semantic information, which can be denoted by:
\begin{eqnarray}\label{signalform}
	{\boldsymbol {A}} = {S_{\boldsymbol {\alpha }}}({\boldsymbol {I}}),
\end{eqnarray}
where ${S_{\boldsymbol{\alpha }}}()$ denotes the semantic extraction network, ${\boldsymbol{\alpha }}$ is the parameters of the neural network.

Then the extracted data ${\boldsymbol{A}}$ is compressed based on semantic relation, named sematic compression, which can be expressed as follows:
\begin{eqnarray}\label{}
	{\boldsymbol{X}} = {C_o }({\boldsymbol{A}}),
\end{eqnarray}
where ${C_o }()$ denotes the semantic compression function, $o$ is the compression ratio. Semantic compression can further remove data redundancy from the semantic level. The extracted semantic information can be a series of features. Let the $k$-th feature be ${{\boldsymbol{A}}^{_k}}$. The process of semantic compression can be expressed as
\begin{eqnarray}\label{compression}
	{{\boldsymbol{A}}^{_k}} = \left\{ {\begin{array}{*{20}{c}}
		{{{\boldsymbol{A}}^{_k}},{\omega _k} \ge {\omega _0}}\\
		{0,{\omega _k} < {\omega _0}}
\end{array}} \right.
\end{eqnarray}
where ${\omega _k}$ is the importance weight of the $k$-th feature and ${\omega _0}$ is the compression threshold. Formula (\ref{compression}) indicates that when the importance weight of the feature is greater than the threshold, it will be transmitted, otherwise it will not be transmitted. Thus the compression ratio $o$ can be computed by $o = \frac{{{N_T}}}{N}$, while $N_T$ is the number of transmitted feature, and $N$ is the number of total features.
Semantic compression has two major benefits: first, it reduces the subsequent computing resource requirements; second, it greatly reduces the amount of data transmitted, and reduces the demand for communication resources and transmission delay. It is worth mentioning that our resource allocation algorithm is suitable for different semantic communication models and semantic compression methods. Here we only take the previously proposed task-oriented semantic communication model as an example to illustrate\cite{Liu_AIoT}, in which the compression unit is feature map.

Then the compressed semantic information is transmitted through the physical channel, and the received semantic information is:
\begin{eqnarray}\label{}
{\boldsymbol{Y}} = h{\boldsymbol{X}} + n,
\end{eqnarray}
where $h$ denotes the channel gain, and $n$ is the power of additive white Gaussian noise (AWGN).

Finally, the receiver directly inputs received semantic information $\boldsymbol{Y}$ to neural network to complete the intelligent tasks. Taking the image classification task as an example, $\boldsymbol{Y}$ is input into the classifier and the output is the probability corresponding to each class, which can be denoted by:
\begin{eqnarray}\label{}
	{\boldsymbol{p}} = {Q_{\boldsymbol{\mu }}}({\bf{Y}}),
\end{eqnarray}
where ${\boldsymbol{p}} = [{p_1},p{}_2,...,{p_M}]$, ${p_k}$ is the probability that the image is classified into the $k$-th class, $M$ is the total number of classes, ${Q_{\boldsymbol{\mu }}}()$ denotes the classifier network, and ${\boldsymbol{\mu }}$ is the parameter.

\vspace{-0.3cm}
\subsection{Transmission Model} 
Consider a cellular network consisting of a set ${{\cal U}}$ of $U$ users and one edge sever. Users extract semantic information locally and then transmit them to edge sever to complete AI tasks.
%, as shown in Fig. \ref{fig:model}.
%\begin{figure}[htbp]
%	\begin{center}
%		\includegraphics[width=0.5\linewidth]{flow.eps}
%	\end{center}
%	\caption{A semantic communication flow over wireless communication systems.}
%	\label{fig:model}
%\end{figure}
Transmission in pyhsical layer still follows Shanno's classic information theory, and the transmission rate of user $n$ is
\begin{eqnarray}\label{R}
	{R_i} = {B_i}{\rm{lo}}{{\rm{g}}_{\rm{2}}}{\rm{(1 + }}\frac{{{h_i}{P_i}}}{{{N_0}{B_i}}}{\rm{)}},
\end{eqnarray}
where $B_i$ is bandwidth of user $i$, $P_i$ is the transmission power of user $i$, $h_i$ is the channel gain between user $i$ and edge sever, and $N_0$ is the noise power spectral density.

Denoting the initial data size of semantic information that users need to upload is $d_0$, and the semantic compression ratio of user $i$ is $o_i$, then the data size of the user $i$ uploading to edge sever is ${d_i} = {d_0} \times {o_i}$. Therefore, the transmission delay of user $i$ is 
%${t_i} = \frac{{{d_i}}}{{{R_i}}}.$
\begin{eqnarray}\label{t}
{t_i} = \frac{{{d_i}}}{{{R_i}}}.
\end{eqnarray}

In actual scenarios, there is always a strict delay threshold for completing the AI task, which can be denoted by $t_0$. Let the success transmission probability of user $i$ be ${\rm{P}}({t_i} \le {t_0})$. To calculate ${\rm{P}}({t_i} \le {t_0})$, we have the following lemma.

\begin{lemma}
The success transmission probability of user $i$ is
\begin{align}\label{transmission}
	P({t_i} \le {t_0}) = 2Q\left( {\frac{{{2^{{a_i}(1 - {o_i})}} - 1}}{{{b_i}\delta }}} \right),
\end{align}
where ${a_i} = \frac{{{d_0}}}{{{B_i}{t_0}}}$, ${b_i} = \frac{{{P_i}}}{{{N_0}{B_i}}}$ and ${\delta ^2}$ is variance of the channel state information. The $Q$-function is the tail distribution function of the standard normal distribution.
\end{lemma}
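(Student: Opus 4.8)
The plan is to translate the probabilistic delay constraint $P(t_i \le t_0)$ into an equivalent event on the random channel gain $h_i$, and then integrate against the channel's distribution. First I would unfold the definition $t_i = d_i/R_i$ from \eqref{t} and observe that, since $R_i > 0$, the event $\{t_i \le t_0\}$ is equivalent to $\{R_i \ge d_i/t_0\}$. Substituting the Shannon rate \eqref{R}, this reads
\[
B_i \log_2\!\left(1 + \frac{h_i P_i}{N_0 B_i}\right) \ge \frac{d_i}{t_0}.
\]

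Next I would invert the monotone rate expression. Dividing by $B_i$, exponentiating base $2$, and isolating the gain term turns the inequality into a simple threshold on $h_i$,
\[
\frac{h_i P_i}{N_0 B_i} \ge 2^{d_i/(B_i t_0)} - 1, \qquad \text{i.e.} \qquad h_i \ge \frac{2^{d_i/(B_i t_0)} - 1}{P_i/(N_0 B_i)}.
\]
Recognizing the definitions $a_i = d_0/(B_i t_0)$ and $b_i = P_i/(N_0 B_i)$, and with the transmitted data size entering the exponent as $d_0(1-o_i)$ so that $d_i/(B_i t_0) = a_i(1-o_i)$, the threshold collapses to $c_i := (2^{a_i(1-o_i)} - 1)/b_i$, exactly the argument appearing inside the $Q$-function. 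All of this is routine monotone algebra; the only things to keep straight are the bookkeeping of constants and the precise form of the exponent $a_i(1-o_i)$, which must be reconciled with the compressed-size convention introduced earlier.

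The substantive step — and the one I expect to be the main obstacle — is evaluating $P(h_i \ge c_i)$, which forces a commitment to the statistical model of the channel state information. The appearance of the factor $2$ together with the standard-normal tail $Q(\cdot)$ signals that $h_i$ is taken to be zero-mean with variance $\delta^2$ and that the relevant event is the two-sided tail $\{h_i \ge c_i\} \cup \{h_i \le -c_i\}$: for a zero-mean Gaussian one has $P(|h_i| \ge c_i) = 2Q(c_i/\delta)$, whereas a one-sided $P(h_i \ge c_i)$ would produce only a single $Q$. I would therefore make the CSI model explicit (zero-mean Gaussian with variance $\delta^2$, with the effective gain entering the rate through its magnitude), verify that $c_i \ge 0$ whenever $o_i \le 1$ so the two-sided formula applies cleanly, and then read off
\[
P(t_i \le t_0) = P(|h_i| \ge c_i) = 2Q\!\left(\frac{2^{a_i(1-o_i)} - 1}{b_i \delta}\right).
\]
The delicate points are justifying the factor of $2$ from the channel distribution and confirming that the threshold stays in the regime where this closed-form tail is valid; once the distributional assumption is fixed, the remainder is immediate.
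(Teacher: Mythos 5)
Your proposal follows essentially the same route as the paper: rewrite $\{t_i \le t_0\}$ as a threshold event on $h_i$ via the monotone Shannon rate, then evaluate the tail probability under $h_i \sim N(0,\delta^2)$. In fact your treatment of the factor of $2$ is more careful than the paper's, which writes the one-sided event $P\bigl(h_i \ge c_i\bigr) = 2Q(c_i/\delta)$ and justifies it only by citing the Gaussian assumption, whereas you correctly observe that the two-sided tail $P(|h_i| \ge c_i)$ is what actually produces the factor of $2$.
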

%Due to page limitation, the proof is omitted here.
\begin{proof}
	Based on (\ref{R}) and (\ref{t}), we have
\begin{align}\label{proof}
	P({t_i} \le {t_0}) = &{\rm{P}}\left( {\frac{{\left( {1 - {o_i}} \right){d_0}}}{{{B_i}{{\log }_2}\left( {1 + \frac{{{h_i}{P_i}}}{{{N_0}{B_i}}}} \right)}} \le {t_0}} \right)\\
	= &{\rm{P}}\left( {\frac{{{2^{{a_i}(1 - {o_i})}} - 1}}{{{b_i}}} \le {h_i}} \right)\tag{\theequation a}\\
	= &2Q\left( {\frac{{{2^{{a_i}(1 - {o_i})}} - 1}}{{{b_i}\delta }}} \right)\tag{\theequation b},
\end{align}
where (\ref{proof}b) follows from ${h_i} \sim N(0,{\delta ^2})$.

This ends the proof.
\end{proof}

\begin{remark}
	As we have seen above, the success transmission probability is mainly affected by power, bandwidth and semantic compression ratio.
\end{remark}

\vspace{-0.3cm}
\subsection{Intelligent Task Performance Model}
There is an implicit relationship between the semantic compression ratio and the performance of AI task ($i.e.$ accuracy for image classification task).  Specifically, the higher the semantic compression ratio, the worse the performance of the final task, and vice versa. In this subsection, we aim to explore the mathematical relationship between AI task performance and semantic compression ratio. 

%\begin{figure}[htbp]
%	\centering
%	\subfigure[VGG16]{
%		\includegraphics[width=0.8\linewidth]{fitting_curve_vgg16.jpg}
%	}
%	\subfigure[Resnet18]{
%		\includegraphics[width=0.8\linewidth]{fitting_curve_resnet18.jpg}
%	}
%%\quad    %用 \quad 来换行
%	\caption{Compression sensitivity analysis of different convolutional neural networks. For each subfigure, the x-axis represents the compression ratio and the y-axis is the accuracy.}
%	\label{fig:fitting}
%\end{figure}

\begin{figure*}[htbp]
	\begin{center}
		\includegraphics[height=0.35\textheight,width=1\linewidth]{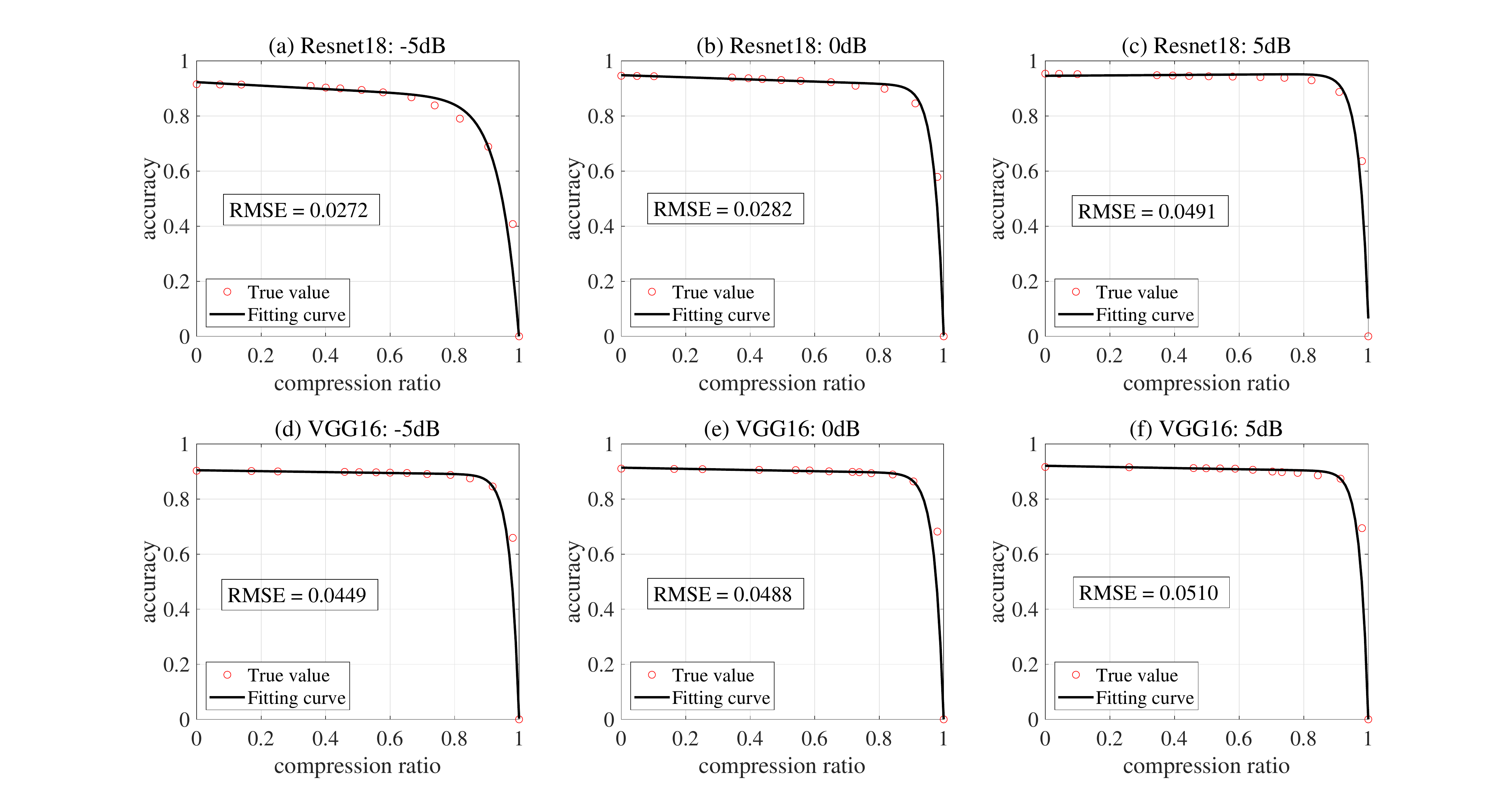}
	\end{center}
	\caption{Relationship between accuracy and compression ratio of different convolutional neural networks. For each subfigure, the x-axis represents the compression ratio and the y-axis is the accuracy.}
	\label{fig:fitting}
\end{figure*}

To obtain the accuracy under different compression ratios ($i.e.$, a point set ${{\cal D}}$ of $D$ points), we first calculate the importance ranking of the feature maps \cite{Liu_AIoT}, then remove the unimportant feature maps in turn and calculate the corresponding classification accuracy and compression ratio. As shown in Fig. \ref{fig:fitting}, the accuracy under different compression ratios of different neural networks are possibly variable. The models in Fig. \ref{fig:fitting}(a) - (c) are based on Resnet18, and retrained in different channel SNR ($i.e.$, -5 dB, 0 dB and 5 dB) while the models in Fig. \ref{fig:fitting}(d) - (f) are based on VGG16. Inspired by the ideas in \cite{fitting}, we empirically find that the curve of ${{\cal D}}$ can be estimated by an exponential function: ${\eta}(o) = {\beta _1}{e^{{\beta _2}o}} + {\beta _3}{e^{{\beta _4}o}}$. Then, we fit the function to obtain the parameters ${\boldsymbol{\beta }} = [{\beta _1},{\beta _2},{\beta _3},{\beta _4}]$, which are different for different neural network, based on gradient descent method. Also in Fig. \ref{fig:fitting}, we find that the curve of ${{\cal D}}$ is well approximated by our exponential function with extremely small reconstruction error, which is quantified by root-mean-square error (RMSE). 

\vspace{-0.5cm}
\subsection{Performance Metric}
In task-oriented semantic communication, considering both transmission and intelligence task performance, we initialize a new metric, named effective accuracy.
%Performance criteria are important to the system design. In the traditional image classification, the accuracy is usually taken as the training target, since there is no transmission process. However, in the task-oriented semantic communication system, our goal becomes maximizing the performace in the receiver. Therefore, we initialize a new metric, named effective accuracy, to describe the performance in the receiver, while taking the effect of transimission into consideration. 
We denote the classification at receiver as event A, and the transmission as event B. ${\rm{P}}(A=1)$ means the classification at receiver is correct and ${\rm{P}}(B=1)$ means semantic information are transmitted succcessfully, and otherwise, we have ${\rm{P}}(B=0)$. Thus, our goal is equivalent to maximize the value of ${\rm{P}}(A=1)$. According to the law of total probability, we have
\begin{eqnarray}\label{lawtotal}
\begin{aligned}
{\rm{P}}\left( {A = 1} \right) = {\rm{P}}\left( {A = 1\left| {B = 1} \right.} \right) \times {\rm{P}}\left( {B = 1} \right) \\+ {\rm{P}}\left( {A = 1\left| {B = 0} \right.} \right) \times {\rm{P}}\left( {B = 0} \right).
\end{aligned}
\end{eqnarray}
Since the second term on the right side of the equation means the probability of a successful classification under failure transmission which equals to 0, (\ref{lawtotal}) can be simplified to ${\rm{P}}\left( {A = 1} \right) = {\rm{P}}\left( {A = 1\left| {B = 1} \right.} \right) \times {\rm{P}}\left( {B = 1} \right)$. In the semantic communication system, ${\rm{P}}\left( {B = 1} \right)$ is equivalent to the probability that the transmission delay is less than the delay constraint while ${\rm{P}}\left( {A = 1\left| {B = 1} \right.} \right)$ is equivalent to the classification accuracy after semantic compression. 

Above all, in order to simultaneously evaluate the impact of communication transmission and semantic compression on the performance of AI tasks, the effective classification accuracy of the classification task of user $i$ can be expressed as
\begin{eqnarray}\label{}
{\gamma _i} = {\rm{P}}({t_i} \le {t_0}) \times {\eta }({o_i}).
\end{eqnarray}
%where ${\eta _0}$ is classification accuracy without semantic compression.

\vspace{-0.3cm}
\subsection{Probelm Formulation}
We aim to jointly optimize the semantic compression ratio and resource allocation for all users to maximize the total effective accuracy, which is given as
\begin{align}\label{Q1}
	&\mathop {\max }\limits_{{\boldsymbol{B}},{\boldsymbol{P}},{\boldsymbol{o}}} \sum\limits_{i = 1}^U {{\gamma _i}} \\
	\rm{s.t.}\;\;\;&{B_i} \ge {B_{\min }},\tag{\theequation a}\\
	&\sum\limits_{i = 1}^U {{B_i}}  \le {B_{\max }},\tag{\theequation b}\\
	&{P_i} \ge {P_{\min }},\tag{\theequation c}\\
	&\sum\limits_{i = 1}^U {{P_i}}  \le {P_{\max }},\tag{\theequation d}\\
	&0 < {o_i} < 1,\tag{\theequation f}
\end{align}
where $B_{\min }$ is the minimum bandwidth allocated to users, $B_{\max }$ is the maximum total bandwidth of all users, $P_{\min }$ is the minimum transmit power allocated to users, and $P_{\max }$ is the maximum total transmit power of all users. Constraint (\ref{Q1}a) denotes the constraint of minimum bandwidth. Constraint (\ref{Q1}b) shows the overall bandwidth constraint. Constraint (\ref{Q1}c) denotes the constraint of minimum power. Constraint (\ref{Q1}d) shows the overall power constraint. Constraint (\ref{Q1}f) means that the semantic compression ratio is between 0-1.

%Constraint (\ref{Q1}b) and (\ref{Q1}d) show that the sum bandwidth and the sum transmit power of all users cannot exceed the given value, which guarantee that the communication bandwidth and the energy consumption of the whole system are limited. Constraint (\ref{Q1}f) means that the semantic compression ratio is between 0-1.

%\begin{figure}
%\setlength{\abovecaptionskip}{-0.2cm}
%\centering
%\includegraphics[width=0.45\textwidth]{sys.eps}
%\caption{Block diagram of the multi-LED VLC system with index modulation in spatial and time domains.}
%\label{fig:system}
%\vspace{-0.6cm}
%\end{figure}

\vspace{-0.3cm}
\section{Proposed Algorithm}
\label{sec:algorithm}

This section proposes a joint compressiom ratio and resource allocation (CRRA) algorithm to solve (\ref{Q1}).

Utilizing the approximation form of $Q$-function $Q(x) \le \frac{1}{2}{e^{ - \frac{{{x^2}}}{2}}}$ \cite{Q_bound}. Problem (\ref{Q1}) can be simplified as 

\begin{align}\label{Q3}
	&\mathop {\max }\limits_{{\boldsymbol{B}},{\boldsymbol{P}},{\boldsymbol{o}}} \sum\limits_{i = 1}^U {{e^{ - \frac{1}{2}{{\left[ {\frac{{{N_0}{B_i}\left[ {{2^{\left[ {\frac{{{d_0}\left( {1 - {o_i}} \right)}}{{{B_i}{t_0}}}} \right]}} - 1} \right]}}{{\delta {P_i}}}} \right]}^2}}} \times } {\eta }({o_i})\\
	\rm{s.t.}\;\;\;&(\ref{Q1}a)-(\ref{Q1}f)\nonumber.
%	s.t.\;\;\;&{B_i} \ge {B_{\min }},\tag{\theequation a}\\
%	&\sum\limits_{i = 1}^U {{B_i}}  \le {B_{\max }},\tag{\theequation b}\\
%	&{P_i} \ge {P_{\min }},\tag{\theequation c}\\
%	&\sum\limits_{i = 1}^U {{P_i}}  \le {P_{\max }},\tag{\theequation d}\\
%	&0 < {o_i} < 1.\tag{\theequation f}
\end{align}

Since problem (\ref{Q3}) is still non-convex, we first divide (\ref{Q3}) into two subproblems, and then solve these two subproblems iteratively. In particular, we first fix the resource allocation and calculate the optimal compression ratio for each user. Then, the problem of resource allocation is formulated and solved with the obtained compression ratio. The two subproblems are iteratively solved until a convergent solution is obtained. 
%Since the objective function (\ref{Q3}) is nonincreasing when ${\boldsymbol{o}}$ or [$\boldsymbol{P}$ $\boldsymbol{B}$] is optimized at each step, the obtained objective value (\ref{Q3}) is nondecreasing sequence. Besides, the objective value has a finite upper bound, the proposed iterative method always converges.

\vspace{-0.3cm}
\subsection{Optimal Compression Ratio}
Given the bandwidth and power allocation, let 
\begin{align}
	h(o_i) = {e^{ - \frac{1}{2}{{\left[ {\frac{{{N_0}{B_i}\left[ {{2^{\left[ {\frac{{{d_0}\left( {1 - {o_i}} \right)}}{{{B_i}{t_0}}}} \right]}} - 1} \right]}}{{\delta {P_i}}}} \right]}^2}}} \times {\eta }({o_i})
\end{align}
then (\ref{Q3}) can be simplified as 
\begin{align}\label{Q4}
	&\mathop {\max }\limits_{{\boldsymbol{o}}} \sum\limits_{i = 1}^U h(o_i)\\
	\rm{s.t.}\;\;\;&0 < {o_i} < 1.\tag{\theequation a}
\end{align}

We can observe from (\ref{Q4}) that once the bandwidth and power allocation are fixed, the optimal semantic compression ratio of each user is independent. Thus, our goal transforms into maximizing each user's effective accuracy by optimizing the semantic compression, which means there is no need to consider cumulative sums. For user $i$, the problem is 
\begin{align}\label{Q5}
&\mathop {\max }\limits_{{o_i}} h(o_i)\\
\rm{s.t.}\;\;\;&0 < {o_i} < 1.\tag{\theequation a}
\end{align}

%The conventional idea is to make the derivative equal to 0 after taking the derivative of the objective function, and get the extreme point. However, the objective function is too complex to find the optimal semantic compression ratio by derivative method. 
Considering the range of $o_i$ is between 0 and 1, and the compression unit is feature map, we here employ the one-dimension enumeration method to obtain the optimal semantic compression ratio.
%We denote the algorithm that solve optimal compression ratio under fixed resource allocation by \textbf{Get\_o}(${\boldsymbol{B}}$,${\boldsymbol{P}}$).

%which is summarized in \textbf{Algorithm} \ref{compression}.

%\begin{algorithm}[t]
%	\large
%	\caption{\textbf{Get\_o}(${\boldsymbol{B}}$,${\boldsymbol{P}}$) Algorithm.}
%	\begin{algorithmic}[1]
%		\STATE \textbf{Input:} ${\boldsymbol{B}}$, ${\boldsymbol{P}}$, $h_0 = 0$, $o_{opt} = 0$.
%		\FOR {$o_i$=0.01: 0.01: 1}
%		\STATE Compute $h(o_i)$ by (\ref{Q5})
%		\IF {$h \ge {h_0}$}
%		\STATE $h_0 = h(o_i)$ and $o_{opt} = o_i$
%		\ENDIF 
%		\ENDFOR
%		\STATE \textbf{Output:} $o_{opt}$
%	\end{algorithmic}
%	\label{compression}
%\end{algorithm}

\vspace{-0.3cm}
\subsection{Optimal Resource Allocation}
With the obtained semantic compresseion ratios, we then optimize the bandwidth and power of the considered semantic communication systems. Note that given $o_i$, ${{\eta}\left( {{o_i}} \right)}$ can be seen as a constant, which is denoted as ${\alpha _i}$. Thus, the resource allocation probelm can be reformulated as
\begin{align}\label{Q6}
	&\mathop {\min }\limits_{{\boldsymbol{B}},{\boldsymbol{P}}} \sum\limits_{i = 1}^U { - {\alpha _i} \times {e^{ - \frac{1}{2}{{\left\{ {\frac{{{N_0}{B_i}\left[ {{2^{\frac{{{d_0}\left( {1 - \sigma } \right)}}{{{B_i}{t_i}}}}} - 1} \right]}}{{\delta {P_i}}}} \right\}}^2}}}} \\
	\rm{s.t.}\;\;\;&(\ref{Q1}a)-(\ref{Q1}d)\nonumber.
%	s.t.{\rm{       }}&{B_i} \ge {B_{\min }}, \tag{\theequation a}\\
%	{\rm{           }}&\sum\limits_{i = 1}^U {{B_i} \le {B_{\max }}}, \tag{\theequation b}\\
%	{\rm{           }}&{P_i} \ge {P_{\min }}, \tag{\theequation c}\\
%	{\rm{          }}&\sum\limits_{i = 1}^U {{P_i} \le {P_{\max }}}.\tag{\theequation d} 
\end{align}

To solve problem (\ref{Q6}), we first convert the non-convex problem into a convex optimization problem. In particular, by introducing slack variables ${\boldsymbol{f}} = [{f_1},{f_2},...,{f_U}]$, ${\boldsymbol{y}} = [{y_1},{y_2},...,{y_U}]$, ${\boldsymbol{x}} = [{x_1},{x_2},...,{x_U}]$, ${\boldsymbol{m}} = [{m_1},{m_2},...,{m_U}]$ and ${\boldsymbol{q}} = [{q_1},{q_2},...,{q_U}]$, problem (\ref{Q6}) can be transformed into
\begin{align}\label{Q7}
	&\mathop {\min }\limits_{{\boldsymbol{B}},{\boldsymbol{P}},{\boldsymbol{f}},{\boldsymbol{y}},{\boldsymbol{x}},{\boldsymbol{m}},{\boldsymbol{q}}} \sum\limits_{i = 1}^U { - {\alpha _i} \times {f_i}} \\
	\rm{s.t.}{\rm{    }}&{f_i} \le {e^{{y_i}}},i = 1,2,...,U, \tag{\theequation a}\\
	{\rm{        }}&{y_i} \le  - \frac{1}{2}x_i^2,i = 1,2,...,U, \tag{\theequation b}\\
	{\rm{        }}&{x_i} \ge \frac{{{N_0}{B_i}{m_i}}}{{\delta {P_i}}}, \tag{\theequation c}\\
	{\rm{       }}&{m_i} \ge {2^{{q_i}}} - 1, \tag{\theequation d}\\
	{\rm{       }}&{q_i} \ge \frac{{{d_0}\left( {1 - \sigma } \right)}}{{{B_i}{t_i}}}, \tag{\theequation e}\\
	{\rm{       }}&(\ref{Q1}a)-(\ref{Q1}d).\nonumber
%	{\rm{       }}&{B_i} \ge {B_{\min }}, \tag{\theequation f}\\
%	{\rm{       }}&\sum\limits_{i = 1}^U {{B_i} \le {B_{\max }}}, \tag{\theequation g}\\
%	{\rm{       }}&{P_i} \ge {P_{\min }}, \tag{\theequation h}\\
%	{\rm{      }}&\sum\limits_{i = 1}^U {{P_i} \le {P_{\max }}}. \tag{\theequation i}
\end{align}
However, constraints (\ref{Q7}a) and (\ref{Q7}c) are still non-convex. 

For constraint (\ref{Q7}a), we use the successive convex approximation (SCA) method to turn it into a convex constraint. Performing a first-order Taylor expansion of ${e^{{y_i}}}$ at ${e^{{y_i}^j}}$, then we have
\begin{eqnarray}\label{}
{f_i} \le {e^{y_i^j}} + \left( {{y_i} - y_i^j} \right){e^{y_i^j}}, 
\end{eqnarray}
where the superscript $j$ represents the value obtained after $j$-th iteration of the variable. 

For constraint (\ref{Q7}c), slack variable ${\boldsymbol{z}} = [{z_1},{z_2},...,{z_U}]$ is introduced, and have 
\begin{eqnarray}\label{z}
{z_i} \ge {B_i}{m_i}.
\end{eqnarray} 
Thus, constraint (\ref{Q7}c) can be transformed into 
\begin{eqnarray}\label{xp}
{x_i}{P_i} \ge \frac{{{N_0}{z_i}}}{{{\delta _i}}}.
\end{eqnarray}
(\ref{z}) can be rewritten as
\begin{eqnarray}\label{}
{z_i} \ge {B_i}{m_i} = \frac{1}{4}\left( {{{\left( {{B_i} + {m_i}} \right)}^2} - {{\left( {{B_i} - {m_i}} \right)}^2}} \right).\end{eqnarray}
By performing a first-order Taylor expansion of ${\left( {{B_i} - {m_i}} \right)^2}$ at point $\left( {B_i^j,m_i^j} \right)$
and using SCA, we have
\begin{eqnarray}\label{z_i1}
\begin{aligned}
{z_i} \ge \frac{1}{4}( {{\left( {{B_i} + {m_i}} \right)}^2} - 2\left( {{B_i} - {m_i}} \right)\left( {{B_i}^j - {m_i}^j} \right) \\ + {{\left( {{B_i}^j - {m_i}^j} \right)}^{\rm{2}}} ). 
\end{aligned}
\end{eqnarray}
Similarly, (\ref{xp}) is equivalent to
\begin{eqnarray}\label{}
 {x_i}{P_i}{\rm{ = }}\frac{{\rm{1}}}{{\rm{4}}}\left( {{{\left( {{x_i} + {P_i}} \right)}^2} - {{\left( {{x_i} - {P_i}} \right)}^2}} \right) \ge \frac{{{N_0}{z_i}}}{{{\delta _i}}}. 
\end{eqnarray}
By performing a first-order Taylor expansion of ${\left( {{x_i} + {P_i}} \right)^2}$ and ${\left( {{x_i} - {P_i}} \right)^2}$ at point $\left( {{x_i}^j,{P_i}^j} \right)$ and using SCA, we can obtain 
\begin{eqnarray}\label{z_i2}
\begin{aligned}
\frac{{4{N_0}{z_i}}}{{{\delta _i}}} \le 2\left( {{x_i} + {P_i}} \right)*\left( {{x_i}^j + {P_i}^j} \right) - {\left( {{x_i}^j + {P_i}^j} \right)^2} \\- 2\left( {{x_i} - {P_i}} \right)*\left( {{x_i}^j - {P_i}^j} \right) + {\left( {{x_i}^j + {P_i}^j} \right)^2}.
\end{aligned}
\end{eqnarray}

So far, all constraints are transformed into convex, and the optimization problem can be reformulated as
\begin{align}\label{Q8}
	&\mathop {\min }\limits_{{\boldsymbol{B}},{\boldsymbol{P}},{\boldsymbol{f}},{\boldsymbol{y}},{\boldsymbol{x}},{\boldsymbol{m}},{\boldsymbol{q}},{\boldsymbol{z}}} \sum\limits_{i = 1}^U { - {\alpha _i} \times {f_i}} \\
	\rm{s.t.}{\rm{    }}&{f_i} \le {e^{y_i^j}} + \left( {{y_i} - y_i^j} \right){e^{y_i^j}},i = 1,2,...,U, \tag{\theequation a}\\
	{\rm{        }}&{y_i} \le  - \frac{1}{2}x_i^2,i = 1,2,...,U, \tag{\theequation b}\\
%	{\rm{        }}&{z_i} \ge \frac{1}{4}( {{\left( {{B_i} + {m_i}} \right)}^2} - 2\left( {{B_i} - {m_i}} \right)\left( {{B_i}^j - {m_i}^j} \right) {\rm{ + }}{{\left( {{B_i}^j - {m_i}^j} \right)}^{\rm{2}}} ), \tag{\theequation c}\\
%	{\rm{      }}&\frac{{4{N_0}{z_i}}}{{{\delta _i}}} \le 2\left( {{x_i} + {P_i}} \right)*\left( {{x_i}^j + {P_i}^j} \right) - {\left( {{x_i}^j + {P_i}^j} \right)^2} - 2\left( {{x_i} - {P_i}} \right)*\left( {{x_i}^j - {P_i}^j} \right) + {\left( {{x_i}^j + {P_i}^j} \right)^2}, \tag{\theequation d}\\
	{\rm{       }}&{m_i} \ge {2^{{q_i}}} - 1, \tag{\theequation c}\\
	{\rm{       }}&{q_i} \ge \frac{{{d_0}\left( {1 - \sigma } \right)}}{{{B_i}{t_i}}}, \tag{\theequation d}\\
	{\rm{       }}&(\ref{Q1}a)-(\ref{Q1}d), (\ref{z_i1}), (\ref{z_i2}).\nonumber
%	{\rm{       }}&{B_i} \ge {B_{\min }}, \tag{\theequation e}\\
%	{\rm{       }}&\sum\limits_{i = 1}^U {{B_i} \le {B_{\max }}}, \tag{\theequation f}\\
%	{\rm{       }}&{P_i} \ge {P_{\min }}, \tag{\theequation g}\\
%	{\rm{      }}&\sum\limits_{i = 1}^U {{P_i} \le {P_{\max }}}, \tag{\theequation h}\\
\end{align}

Problem (\ref{Q8}) is a convex optimization problem, and can be effectively solved via the dual method\cite{convex}. Optimal results can be obtained by setting the initial value of $y_i^j$, $B_i^j$, $m_i^j$, $x_i^j$ and $P_i^j$, updating variables, and performing iterations until the problem converges. %which is denoted by \textbf{Get\_BP}(${\bf{o}}$)

Finally, we can iteratively solve (\ref{Q4}) and (\ref{Q8}) until a convergent solution is obtained. The overall CRRA algorithm is summarized in \textbf{Algorithm} \ref{summary}.
%\vspace{-0.2cm}
%\subsection{Iterative Solution}
%Once we obtain the resource allocation scheme, we can obtain the optimal semantic compression ratios. Accordingly, we can obtain the optimal resource allocation scheme based on the obtained semantic compression ratios. Then, the semantic compression ratios can be updated again based on the resource allocation. The iteration ends when both the semantic compression ratios and resource allocation remain fixed. Obviously, the algorithm can always reach convergence after a certain number of iterations. We summarize the proposed CRRA algorithm in \textbf{Algorithm} \ref{summary}.
\begin{algorithm}[t]
	\small
	\caption{CRRA Algorithm.}
	\begin{algorithmic}[1]
		\STATE Initialize semantic compression ratio $\boldsymbol o$, resource allocation $\boldsymbol B$ and $\boldsymbol P$.
		\REPEAT
		\STATE With fixed resource allocation $\boldsymbol B$ and $\boldsymbol P$, optimize semantic compression ratios $\boldsymbol o$ with the enumeration method.
		\STATE With fixed semantic compression ratios, obtain the optimal resource allocation $\boldsymbol B$ and $\boldsymbol P$ by solving (\ref{Q8}).
		\UNTIL {the objective value (\ref{Q3}) converges.}
	\end{algorithmic}
	\label{summary}
\end{algorithm}

\vspace{-0.2cm}
\section{Simulation Results and Analysis}
\label{sec:simulation}
In this section, we compare the proposed CRRA algorithm with three baselines: resource allocation scheme with fixed compression ratios (labeled as "FCR"), compression ratios optimization scheme with fixed resource allocation (labeled as "FRA"), and conventional resource allocation scheme to maximize the system sum rate (labeled as "MSR"). There are $U=10$ users uniformly distributed in a $R=50$ m square area with an edge sever at the center. The simulation parameters are summarized in Table \ref{tab:simulation}.
\begin{table}[htbp]
	\small
	\vspace{-0.1cm}
	\centering
	\caption{SIMULATION PARAMETERS.}
	\setlength{\abovecaptionskip}{-0.5cm}
	\begin{tabular}{ccc}
		\toprule
		Parameter & Value \\
		\hline
		Initial data size, ${{d_0}}$ & 24.5 MB \\
		Delay constraint of user $n$, $t_n$ & 1-10 ms\\
		Noise power spectral density, $N_0$ & -174 dBm/Hz\\
		Minimum bandwidth, ${{B_{\min}}}$ & 0.01 MHz\\
		Minimum transmit power, ${P_{\min}}$ & -20dBm\\
		The number of users, $U$ & 10\\
		Compression ratio, $o$ & 0-1\\
		Maximum bandwidth, $B_{\max}$ & 1-30 MHz\\
		Maximum transmit power, $P_{\max}$ & 1 mW-1 W\\
		\toprule
	\end{tabular}
	\vspace{-0.cm}
	\label{tab:simulation}
\end{table}

%\begin{figure}[htbp]
%	\begin{center}
%		\includegraphics[width=1\linewidth]{acc_ratio_power.eps}
%	\end{center}
%	\caption{Average efficient accuracy versus the fixed compression ratio under different maximum transmit power.}
%	\label{fig:CR_power}
%\end{figure}
%
%\begin{figure}[htbp]
%	\begin{center}
%		\includegraphics[width=1\linewidth]{acc_ratio_bandwidth.eps}
%	\end{center}
%	\caption{Average efficient accuracy versus the fixed compression ratio under different maximum bandwidth.}
%	\label{fig:CR_bandwidth}
%\end{figure}
%Fig. \ref{fig:CR_bandwidth} compares the total effcient accuracy with different compression ratios under different maximum bandwith. It can be observed that the optimal compression ratio at different maximum bandwidths is different, which means adaptively adjusting the compression ratios according to maximum bandwidth changing is necessary. We can also observe that the performance is increasing first and then decreasing as the compression ratios increasing. This is because the choice of compression ratios is a trade-off between communication transmission and task performance, which means that choosing the optimal compression ratios is of great significance for task performance.

\begin{figure}[htbp]
	\begin{center}
		\includegraphics[width=1\linewidth]{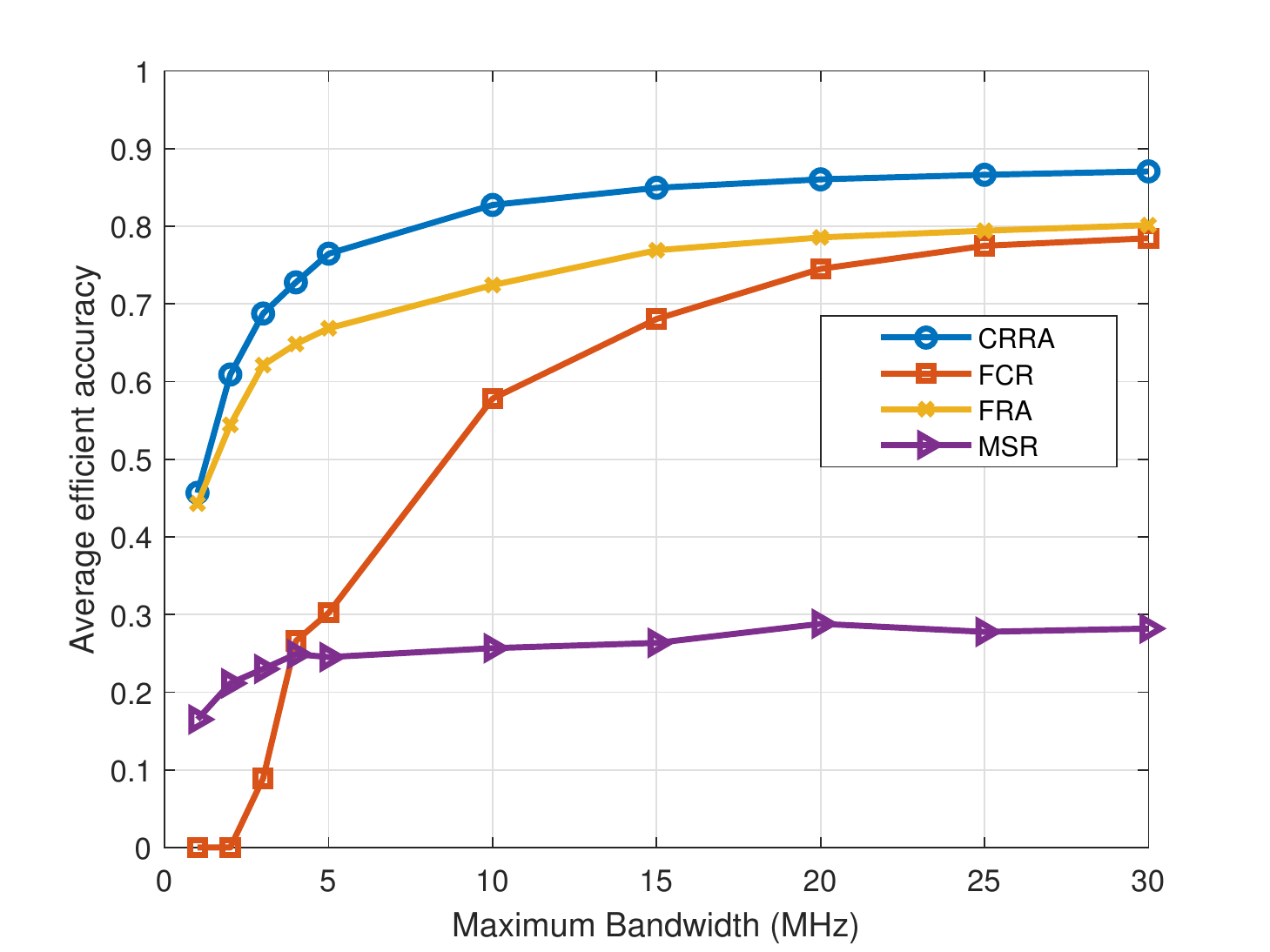}
	\end{center}
	\caption{Average effective accuracy versus the maximum bandwidth.}
	\vspace{-0.5cm}
	\label{fig:bandwidth1}
	\end{figure}

%\begin{figure}[htbp]
%	\begin{center}
%		\includegraphics[width=1\linewidth]{acc_bandwidth_p1_fix8.eps}
%	\end{center}
%	\caption{Average effective accuracy versus the maximum bandwidth with $P_{max} = 1$W.}
%	\label{fig:bandwidth2}
%\end{figure}
The average effective accuracy versus the maximum bandwidth are shown in Fig. \ref{fig:bandwidth1}. As shown in this figure, the average effective accuracy increases with the maximum bandwidth and gradually converges to a certain threshold. This is because large bandwidth can decrease the transmission delay and tolerate small semantic compression ratio, which consequently increases the probability of successful transmission and average effective accuracy. It can be observed that the average effcient accuracy of the proposed algorithm is always higher than that of others, especially in low bandwidth regions. It can be found that the conventional resource allocation scheme MSR is no longer suitable for semantic communication scenarios. This is because the conventional resource allocation scheme only optimizes the transmission rate and lacks the consideration of semantics and subsequent intelligent tasks.

%This is because the proposed algorithm jointly optimize compression ratios and resource allocation.

\begin{figure}[htbp]
	\begin{center}
		\includegraphics[width=1\linewidth]{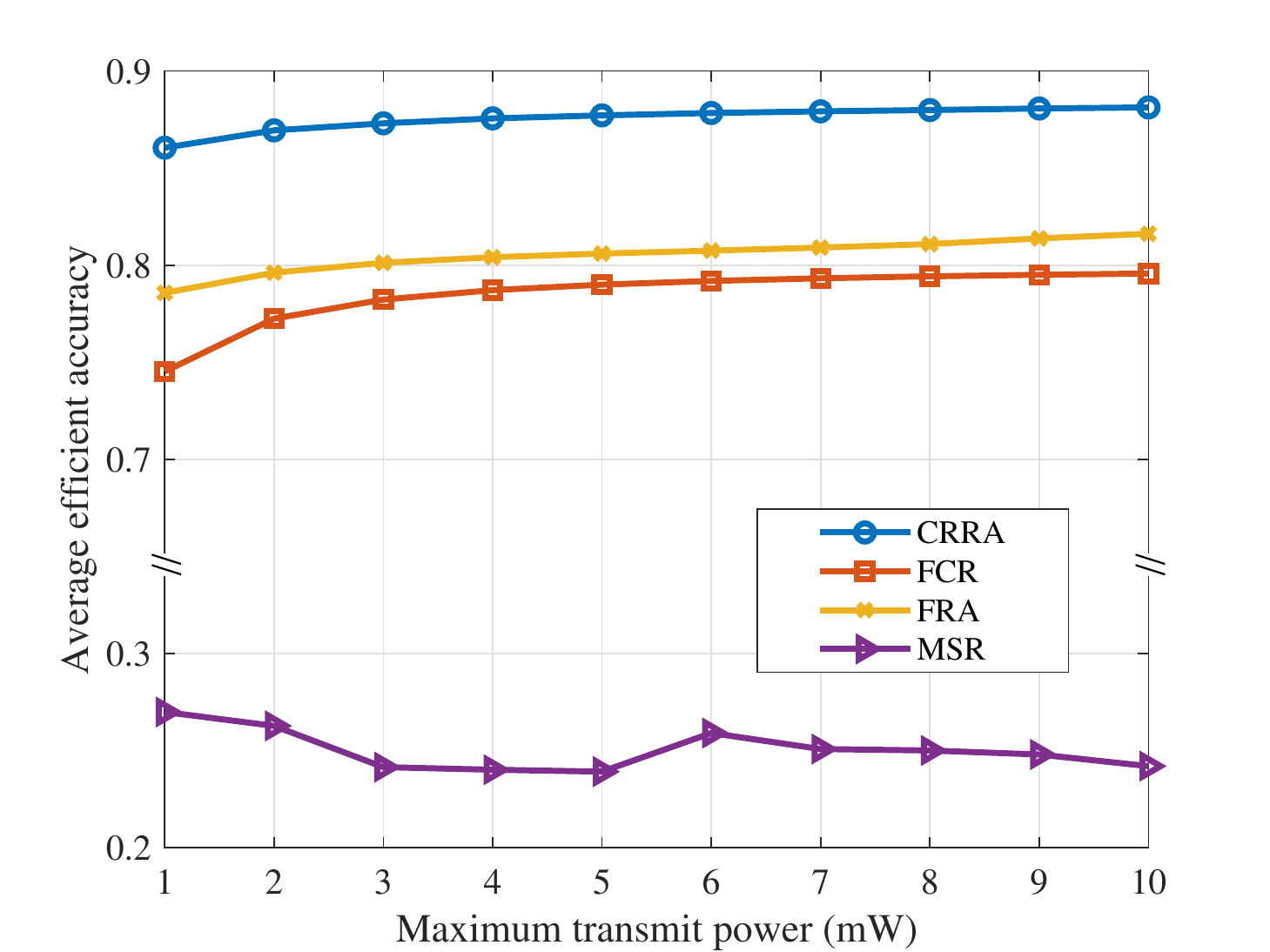}
	\end{center}
	\caption{Average effective accuracy versus the maximum sum transmit power.}
	\label{fig:power}
\end{figure}
The average effective accuracy versus the maximum transmit power is depicted in Fig. \ref{fig:power}. From this figure, we can observe that the proposed algorithm achieves better performance than FCR, FRA and MSR. Fig. \ref{fig:power} demonstrates that the average effective accuracy increases as the maximum transmit power. This is because large transmit power can increase the transmission rate, which consequently increases the amount of transmitted data. It can also be observed that the proposed algorithm harvests significant performance gains compared with the conventional schemes even under large transmit power. From Fig. \ref{fig:power}, we can further find that the conventional resource allocation method has little improvement in semantic performance. This is because that conventional method only focus on technical performance, which may not necessarily transmit the semantic information required for intelligent tasks well. Besides, the proposed algorithm can perform well even in very low transmit power regions, which shows that our algorithm is very suitable for low-power scenarios.

\vspace{-0.1cm}
\section{Conclusion}
\label{conclusion}
%In this paper, we have investigated the AI tasks' performance maximization problem in semantic communication system via jointly optimizing semantic compression ratio and resource allocation. We have derived a closed-form expression of the relationship between semantic compression ratio and task performance. Based on the relationship, we have formulated a joint semantic compression ratio and resource allocation optimization problem under bandwidth and energy constraints to maximize the task's effective performance. We have first seperated the problem into two subproblems. The first subproblem is a semantic compression ratio optimization problem with a given resource allocation scheme, which is solved by a enumeration method. Based on the obtained compression ratio, the second subproblem finds the optimal resource allocation scheme, which is solved by a convex optimization method. The two subproblems are then updated iteratively until a convergent solution is obtained. The complexity of CRRA will be analyzed in journal version. Simulation results have shown the superiority of the proposed algorithm.

In this paper, we have investigated the AI tasks' performance maximization problem in semantic communication system via optimizing the resource allocation scheme. We have derived a closed-form expression of the relationship between semantic compression ratio and task performance and proposed the CRRA algorithm. The proposed algorithm takes the subsequent AI tasks into consideration and focus on the performance of semantic transmission, which differ from conventional resource allocation scheme. The complexity analysis of CRRA is omitted here due to space reasons and will be given in journal version. Simulation results have shown the superiority of the proposed algorithm.
%\vspace{-0.cm}
%\begin{appendices}
%\section{$Proof\ of\ Lemma\ 1$}
%\begin{proof}
%*******.
%
%This ends the proof.
%\end{proof}
%\vspace{-0.2cm}
%\end{appendices}
% -------------------------------------------------------------------------
\bibliographystyle{IEEEbib}
\nocite{*}\bibliography{stimreference}

\end{document}